\newcommand{\N}{\mathbb{N}}
\newcommand{\R}{\mathbb{R}}
\newcommand{\C}{\mathbb{C}}
\newcommand{\Z}{\mathbb{Z}}
\newcommand{\sk}[1]{\left\langle{#1}\right\rangle}
\newcommand{\Tr}{\operatorname{Tr}}
\newcommand{\tr}{\operatorname{tr}}
\newcommand{\im}{\operatorname{Im}}
\newcommand{\id}{\operatorname{id}}
\newcommand{\cH}{\mathcal{H}}
\newcommand{\cC}{\mathcal{C}}
\newtheorem{fact}{Fact}[section]
\newtheorem{lemma}[fact]{Lemma}
\newcommand{\Aref}{{A_{\operatorname{ref}}}}
\newcommand{\re}{\operatorname{Re}}
\newcommand{\Res}{\operatorname{Res}}
\newcommand{\sA}{S_A}
\newcommand{\sB}{S_\Aref}
\newcommand{\sAf}{S_{A+F}}
\newcommand{\sAg}{S_{A+G}}
\newcommand{\sAfg}{S_{A+F+G}}
\newcommand{\inv}{{-1}}
\newcommand{\Pmm}{{-}}
\newcommand{\Ppm}{{+-}}
\newcommand{\Pmp}{{-+}}
\begin{document}

\title{\textsc{Vacuum polarisation without infinities}}
\author{Dirk - André Deckert, Franz Merkl, Markus Nöth\thanks{Department of Mathematics, LMU Munich}}

\maketitle

\begin{abstract}
    In honour of Detlef Dürr, we report on a mathematical rigorous computation of the electric vacuum polarisation
    current and extract the well-known expression for the second order perturbation. Intermediate steps in the presented
    calculation demonstrate, to the knowledge of the authors for the first time, mathematical rigorous versions of the
    combined dimensional and Pauli-Villars regularisation schemes. These are employed as computational tools to infer
    convenient integral representations during the computation. The said second order expression is determined up to a
    remaining degree of freedom of a real number -- without ill-defined terms from start to end.
\end{abstract}

\section{Introduction}

The definition and the original computations of the electric vacuum polarisation current in quantum electrodynamics
(QED), based on the pioneering works of Dirac, Heisenberg, and others, go way back to Schwinger, Feynman, and Dyson,
and, in its original form, may still be best accessible from Dyson's manuscript \cite{Dyson2006}. Today, these are
contained in various nuances in almost every textbook on advanced quantum mechanics. It might therefore appear that this
topic has long been settled. Given that QED is such an important theory for the
human understanding of nature and a
century has passed, it better should have. But, at least mathematically, it has not. All of these computations start with
an ill-defined equation of motion for the electric quantum vacuum, or worse, an ill-defined and far fetched scattering
matrix coefficient, derive a similarly ill-defined expression for the electric current which is then massaged by means
of several informal manipulations, such as ``subtracting'' an ill-defined zero-field electric current, ``dropping''
ill-defined expressions that do not appear gauge invariant, ``introducing'' diverging counter terms to absorb remaining
infinities into the bare electric charge constant, etc. For everyone who has had the opportunity to get to know Detlef, it
probably goes without saying that he was very unsatisfied with this state of
affairs. At the same time, of course, he could not have
cared less whether the consensus opinion of the scientific community was otherwise. In fact, beside his renown work in
the foundations of quantum mechanics, he dedicated a substantial part of his scientific work to the understanding of
classical and quantum electrodynamics. In 2006, Detlef, Martin Schottenloher, D.-A.D.\ and F.M.\ started a seminar
series on the mathematics of QED based on the books \cite{Dyson2006,Thaller1992,Scharf1995,loop1986}, articles 
\cite{Dirac1934,Shale1965,Ruijsenaars1977,Fierz1979,scharf1986causal,Wurzbacher2001}, among others, 
entirely in the spirit of Dirac's quote \cite[p. 184]{kragh1990dirac}:
\begin{quote}
    ``I must say that I am very dissatisfied with the situation, because this so-called \emph{good theory} does involve
    neglecting infinities which appear in its equations, neglecting them in an arbitrary way. This is just not sensible
    mathematics. Sensible mathematics involves neglecting a quantity when it is small – not neglecting it just because
    it is infinitely great and you do not want it!'' \hfill \emph{-- Dirac, 1975,}
\end{quote}
which initiated the works \cite{Deckert2010a,Deckert2014,deckert_perspective_2015,deckert2016external}. Many years later,
we are now under the impression of having a mathematical rigorous as well as non-perturbative understanding of this
computation as well as the definition of the corresponding time evolution and scattering matrix for QED in an external,
classical electrodynamic field, which will be published in a series of forthcoming articles. This first article shall
provide an introduction by treating only the second order of perturbation of the electric vacuum polarisation current in
our approach. A slightly different but also rigorous computation of the second order of perturbation was already done by
Scharf \cite{Scharf1995} while the non-perturbative computation in \cite{Scharf1986} seems incomplete.

Along the way we have learned about many other series of works on various aspects of the mathematical rigorous
description of the quantum vacuum, three of which we would like to mention here: First, the works of Mickelsson and
collaborators \cite{Langmann1996,Mickelsson1998,Mickelsson2001, mickelsson_phase_2014} developing a bundle theoretic,
conceptual geometric theory of the phases in quantum field theory. Second, the works \cite{nenciu1987} and
\cite{Pickl2008, pickl2008adiabatic} on adiabatic electron-positron pair-creation in an external field. Third, the long
series of articles by Gravejat, Hainzl, Lewin, Séré, and Solovej which mainly studies the stationary solutions of a
non-linear model of the quantum vacuum, among them \cite{hainzl2004vacuum, hainzl2005self, hainzl2007minimization,
gravejat2012two, gravejat2013construction} and, in particular, the overview in \cite{lewin2014nonlinear}. Those models
are not only able to describe the polarisation of the quantum vacuum by an external field but also the back reaction of
its quantum expectation and make contact to effective dynamics in terms of the Heisenberg-Euler Lagrangian
\cite{gravejat2018derivation}. Third, an entirely different approach was developed by Finster and his group which
constructs the interacting fermionic projectors by means of a variational principle
\cite{finster2006principle,finster2010causal,finster2015non,finster2016non,Finster2015a,FinsterBook} and from which it
was shown that many aspects of quantum field theory, also beyond QED, can be derived rigorously. Lastly, viewing  QED in
external fields from the perspective of algebraic quantum field theory in curved space-time \cite{bar2007wave,
bar2009quantum}, the study of Hadamard states, see, e.g., \cite{schlemmer2015current}, can be seen as a bridge between the algebraic formulation, causal fermion
systems, and the approach followed here.

\section{Bogoliubov's electric current formula}

Before choosing a second-quantised expression of the electric charge current, for whatever that entails physically, it
is illustrative to discuss corresponding expressions in the one-particle quantum theory. For this purpose, we regard a
one-particle Schrödinger evolution of the form
\begin{align}\label{eq_schroedinger}
    i\partial_t \psi(t) =  H_A(t)\psi(t)
\end{align}
for times $t\in\R$, wave functions $t\mapsto\psi(t)$ with values in a Hilbert space $\cH$, and an external, classical,
time-dependent four-vector potential $A=(A_\mu)_{\mu=0,1,2,3}=(A_0,-\mathbf A)\in\cC^\infty_c(\R^4,\R^4)$. In the whole paper, we impose natural units
$1=\hbar=c=\mu_0=\epsilon_0$ which imply that charges are dimensionless, masses, momenta, energies and four-vector
potentials have the dimension $1/\text{length}$ while the Fourier transformed four-vector potentials carry dimension
$(\text{length})^3$. Furthermore, $(H_A(t))_{t\in\R}$ shall denote a family of Hamiltonians, i.e., possibly unbounded,
self-adjoint operators with domain $D(H_A(t))\subseteq \cH$. In the physically relevant cases regarded below, the latter
domain will be independent of time and coincide with the domain of the time-independent free Hamiltonian
$H_0=H_{A}|_{A=0}$.
Suppose, \eqref{eq_schroedinger} generates a strongly continuous unitary time-evolution $U_A=(U_A(t_1,t_0))_{t_0,t_1\in\R}$
on $\cH$, such that for a given initial value $\psi(t_0)$ in a suitable domain at time $t_0$, the corresponding unique
solution to \eqref{eq_schroedinger} is given by the map $t\mapsto \psi(t):=U_A(t,t_0)\psi$, one may introduce the
scattering operator 
\begin{align}
    S_A = 
    U_0(0,t_1) U_A(t_1,t_0) U_0(t_0,0) \,,
\end{align}
where $t_0\ll 0$ and $t_1\gg 0$ are taken so small and large, respectively,
that the temporal support of $A$ is contained in the interval $(t_0,t_1)$, and  
$U_0=U_A|_{A=0}$ is short for the corresponding free time-evolution. 

Using the notation $\partial_F g(F)=\frac{d}{d\epsilon}g(\epsilon F)|_{\epsilon=0}$ and viewing the charge current as
being generated in response to a perturbation of the four-vector potential $A$, one may define the evaluation of the
electric current at a test function $F=(F_0,-\mathbf F)\in\cC^\infty_c(\R^4,\R^4)$ as the operator expression
\begin{align}\label{eq_one-particle_current}
    J_A(F) := i e S_A^{-1} \partial_F S_{A+F}  \,,
\end{align}
which we refer to as Bogoliubov's formula of the electric current. On a suitable domain, this derivative can be
evaluated by employing the comparison of the dynamics of $A+F$ and $A$, i.e.,
\begin{align}
    \partial_F U_{A+F}(t_1,t_0) 
    &=
    -i\int_{t_0}^{t_1} d s \, U_A(t_1,s) \partial_F H_{A+F}(s) U_{A}(s,t_0) \, ,
    \label{eq_comparison_dyn}
\end{align}
for times $t_0,t_1\in\R$, and for which the appearance of the unbounded operators on the right-hand side will turn out
unproblematically in the physical relevant cases regarded below. Provided the family of Hamiltonians $(H_A(t))_{t\in\R}$ are
sufficiently regular in $A$, formulas in \eqref{eq_one-particle_current} and \eqref{eq_comparison_dyn} imply for $t\ll 0$ earlier than the temporal support of $A$
\begin{align}
    J_A(F)
    = \int\limits_{-\infty}^{+\infty} d s\,
        U_0(0,t)U_A(t,s) \partial_F H_{A+F}(s) U_A(s,t) U_0(t,0) \,.
    \label{eq_one_part_j_expression}
\end{align}
Below we state the well-known evaluations of \eqref{eq_one_part_j_expression}
for a minimally-coupled, non-relativistic, charged
spin-0 Schrödinger particle 
and a relativistic Dirac particle both of electric charge $-e<0$ and
in an external four-vector potential $A$:
\begin{enumerate}
    \item Schrödinger case, i.e., for $H_A(t)=H_A^S(t):=\frac{1}{2 m} (-i\nabla +
        e\mathbf A(t))^2-eA^0(t)$: 
        \begin{align}
            &j_A(F):=\langle \psi, J_A(F) \psi\rangle  
            = 
            \int\limits_{\R\times \R^3} dt\,d\mathbf x\, 
            \left[
                \rho(t,\mathbf x) F_0(t,\mathbf x)
                -\mathbf j(t,\mathbf x) \cdot \mathbf F(t,\mathbf x)
                \right],\nonumber
            \\
            &\rho(t,\mathbf x)
            =-e |\psi(t,\mathbf x)|^2, 
            \quad \text{and} \quad 
            \mathbf j(t,\mathbf x)
            =
            -\frac em
              \im \psi(t,\mathbf x)^*(\nabla+i e\mathbf A(t,\mathbf x))\psi(t,\mathbf x) 
            \, .
        \end{align}
    \item Dirac case, i.e., for $H_A(t)=H_A^D(t):=\boldsymbol\alpha \cdot
        (-i\nabla + e \mathbf A(t)) - e A^0(t) + \beta m$:
        \begin{align}
            j_A(F):=\langle \psi, J_A(F) \psi\rangle  
            = 
            \int\limits_{\R^4} j^\mu(x) F_\mu(x) d^4x,
            \qquad j^\mu(x) 
            = 
            -e \overline{\psi}(x) \gamma^\mu \psi(x)\,.
        \end{align}
\end{enumerate}

In conclusion, the electric currents of respective one-particle theories can be recovered from Bogoliubov's single
current formula \eqref{eq_one-particle_current}. By virtue of its generality, we will employ it as starting point to
infer a second-quantised version.

\section{Electric current of a Dirac sea in an external field}

In what follows, $U$ shall denote a unitary operator on one-particle Hilbert space $\cH$. Assuming some familiarity of the reader with the well-known second-quantisation
formalism of the Dirac field, e.g.~\cite{Thaller1992}, we avoid a lengthy introduction and only caricature the
respective Fock space vacuum expectation values. Let $P_+,P_-$
denote the orthogonal projections of the
one-particle Hilbert space onto the positive and negative spectral
subspaces $\cH^+$, $\cH^-$ of the free Dirac Hamiltonian $H^D_{0}$, respectively. The corresponding splitting of $\cH$ is denoted by $
\cH=L^2(\R^3,\C^4)=\cH^+\oplus\cH^-$ and the vacuum vector $\Omega$
in standard Fock space is taken to represent the Dirac
sea of $\cH^-$.
As a first guess, one might try to define the second-quantised vacuum
expectation value $\langle\Omega, \widetilde U\Omega\rangle$ of a second-quantised version $\widetilde U$ of $U$
by means of an infinite-dimensional Slater determinant, i.e., $\langle\Omega,
\widetilde U\Omega\rangle \stackrel{?}{=} \det U_\Pmm \big|_{\cH^-\to\cH^-}$
for $U_\Pmm:=P_-UP_-$. Interpreting the determinant as a
Fredholm determinant and using the notation $I_1(\cH)$ and $I_2(\cH)$ to denote the trace class and Hilbert-Schmidt
ideals, respectively,
this would require having $U_\Pmm\in \id_\Pmm + I_1(\cH^-)$. But this does not hold in general,
hence, the overset `?'. A way to fix this is to utilise another unitary operator $R_U$ on $\cH^-$, in our
notation $R_U\in U(\cH^-)$, interpreted as a base change in the vacuum, to define
\begin{align}
    \label{eq_vacuum_exp}
    \sk{\Omega, \widetilde U\Omega} := \det U_\Pmm  R_U
    \big|_{\cH^-\to\cH^-} \, .
\end{align}
This can indeed be shown to be the vacuum expectation value of some unitary lift $\widetilde U$ of the unitary
one-particle operator $U$ on $\cH$ to some Fock space; cf.~\cite{Deckert2010a}. In the special case of
\begin{align}
    \label{eq_ures_invert}
    U_\Ppm:=P_+UP_-,\, U_\Pmp:=P_-UP_+ \, \in I_2(\cH),
    \qquad
    \|\id-U\|<1 \Rightarrow {U^*}_\Pmm  \in 
    \operatorname{GL}(\cH^-) \, ,
\end{align}
it is possible to construct such an $R$ along the following lines. First, we
observe that
\begin{align}
    U_\Pmm  {U^*}_\Pmm  
    = {(U U^*)}_\Pmm  - U_\Pmp  {U^*}_\Ppm  
    = \id_\Pmm - |U_\Pmp  |^2
    \in 
    \id_\Pmm + I_1(\cH^-)
    \label{eq_one_plus_traceclass}
\end{align}
since the product of two Hilbert-Schmidt operators is in the trace class. Furthermore, the
expression~\eqref{eq_one_plus_traceclass} is positive definite thanks to~\eqref{eq_ures_invert}. However, ${U^*}_\Pmm$
is in general not unitary. Therefore, we employ a polar decomposition ${U^*}_\Pmm =R_U|{U^*}_\Pmm |$ with radial part
$|{U^*}_\Pmm |=\sqrt{{U}_\Pmm {U^*}_\Pmm } \in \id_\Pmm+I_1(\cH^-)$ having a positive Fredholm determinant. Exploiting
the invertibility of ${U^*}_\Pmm $, 
\begin{align}
    R_U={U^*}_\Pmm |{U^*}_\Pmm |^{-1}\in U(\cH^-)
    \label{eq_right_op}
\end{align}
renders the right-hand side of~\eqref{eq_vacuum_exp} well-defined. Furthermore, since $R_U\in U(\cH^-)$, it
turns out that the corresponding lift will also be unitary on the underlying Fock space; cf.~\cite{Deckert2010a}.

The lift $\widetilde U$ of $U$ is known to be unique up to a phase $e^{i\varphi}\in U(1)$;
cf.~\cite{Shale1965,Deckert2010a}. Hence, given $U\in U(\cH)$ that fulfils~\eqref{eq_ures_invert}, the pair $(U,R_U)$
characterises the equivalence class of lifts $[U,R_U]=\{(U,R_UQ)\,|\, Q\in U(\cH^-) \cap
\left(\id_{\cH^-}+I_1(\cH^-)\right)\}$ while two pairs $(U,R_U Q)$ and $(U,R_U Q')$
correspond to the same lift if and only if $\det {Q}^{-1}Q'=1$. The lift of $U$ characterised by $(U,R_U)$, i.e.,
$Q=\id_{\cH^-}$, shall be denoted by $\overline U$. This implies $\sk{\Omega,\overline U,\Omega}>0$
for~\eqref{eq_vacuum_exp}. For two $U,U'\in U(\cH)$ so close to $\id_{\cH}$ such that all three operators $U$, $U'$ and
$UU'$ fulfil~\eqref{eq_ures_invert}, and two pairs $(U,R)$ and $(U',R')$, characterising lifts $\widetilde{U}$ and
$\widetilde{U}'$, respectively, $(UU',R'R)$ characterises a lift of $UU'$ since ${(UU')}_\Pmm R'R= U_\Pmm U'_\Pmm
R'R+U_\Pmp U'_\Ppm R'R\in\id_{\cH^-}+I_1(\cH^-)$. Note the reversed order in the second component $R'R$. 
Moreover, $(U^{-1},R^{-1})$ characterises the lift ${\widetilde U}^{-1}$ of $U^{-1}$. We refer the interested reader to
\cite{Deckert2010a} and also to \cite{loop1986} for the underlying mathematical theory. 
We fix a reference vector potential $\Aref\in\cC^\infty_c(\R^4,\R^4)$ for the
rest of the article. We shall only
regard vector potentials $A\in\cC^\infty_c(\R^4,\R^4)$ such that the corresponding one-particle scattering
operators fulfil $\|S_\Aref^{-1}S_A-\id_{\cH}\|<1$, as a global construction
will not be needed here.
Moreover,
it is well-known, e.g., \cite{Ruijsenaars1977,Deckert2010a}, that ${S_A}_\Ppm,{S_A}_\Pmp\in
I_2(\cH)$ hold true for any $A\in\cC^\infty_c(\R^4,\R^4)$. In view of \eqref{eq_ures_invert},  this ensures
well-definedness of the lift
\begin{align}
    \widetilde S^A_{A+F} 
    = 
    \overline{S_\Aref^{-1}S_A}^{-1}\overline{S_\Aref^{-1}S_{A+F}}
    \label{eq_lift}
\end{align}
and allows a first attempt in defining a vacuum expectation of the current in the spirit of~\eqref{eq_one-particle_current}:
\begin{align}
    \widetilde{j}_A(F)=
    \sk{\Omega, \widetilde J_A(F)\Omega} 
    &=
    i\partial_F \sk{\Omega, \widetilde S^A_{A+F} \Omega}
    =\re i\partial_F \sk{\Omega, \widetilde S^A_{A+F} \Omega}
    \,.
    \label{eq_current}
\end{align}
We note that since $\widetilde S^A_{A+F}$ is a unitary lift on the
underlying Fock space which depends smoothly on $F$ in the relevant norm;
cf.~\cite{Deckert2010a}, the expression in the centre of \eqref{eq_current} is
real-valued.

However, taking different lifts $e^{i\varphi_A}\overline{S_\Aref^{-1}S_A}$, for an arbitrary, $A$-dependent
$\varphi_A\in\R$, gives
\[
    \widehat S^A_{A+F} 
    =
    {\left(e^{i\varphi_A}\overline{S_\Aref^{-1}S_A}\right)}^{-1}
    \left(e^{i\varphi_{A+F}}\overline{S_\Aref^{-1}S_{A+F}}\right)
\]
and yet another corresponding current
\begin{align}
    \widehat j_A(F)
    =
    i\partial_F \sk{\Omega, {\widehat S}^A_{A+F} \Omega} 
    =
    \widetilde j_A(F) - d \varphi_A(F) \, .
\end{align}
It is therefore the task to select physically relevant candidates for the physical current $j$ among those $\widehat j$
for the various phases $\varphi$. Morally, this non-uniqueness in the choice of the current 
reflects the ill-definedness of the current in the traditional formulation of QED. 

In order to characterise this degree of freedom geometrically, we observe that the exterior derivative $c:=d\widehat j$
does not dependent on the choice of the phase $\varphi$, i.e.,
\begin{align}
    d\widehat j_A(G,F)
    =:c_A(G, F) = d \widehat j_A (G, F)
    =
    \partial_G \widehat j_{A+G}(F)
    - \partial_F \widehat j_{A+F}(G)
    =d\widetilde j_A(G,F)
    \label{eq_dj}
\end{align}
because $dd\varphi=0$. By Poincaré's lemma and the fact that the space of permissible vector potentials $A$ is
star-shaped, the two-form $c$ contains precisely the same information as the class of all $\widehat j$ with varying
$\varphi$. By construction, the two-form $c$ is closed, i.e., $d c = dd\widehat
j = 0$. This will play a crucial role in
the non-perturbative construction addressed in forthcoming papers. 

The physically relevant $j$ should now be selected with conditions 
\ref{Exterior derivative}-\ref{Reference current} in mind: 
\begin{enumerate}[start=0,label=C\arabic*]
    \item\label{Exterior derivative} \emph{Exterior derivative}: Given \eqref{eq_dj}, we require $dj=c$; 
    \item\label{Causality} \emph{Causality:} For $F,G\in\cC^\infty_c(\R^4,\R^4)$ such that
      the support of $G$
      does not overlap the closed causal past of the support of $F$, we
        require $\partial_G j_{A+G}(F)=0$;
    \item\label{Relativistic invariance} \emph{Relativistic invariance:} For any proper, orthochronous
        Lorentz-transformation $\Lambda$ and any translation displacement $a\in\R^4$, 
        we require $j_{\Lambda A}(\Lambda F)=j_A(F)=j_{A(\cdot -a)}(F(\cdot-a))$;
    \item\label{Gauge invariance} \emph{Gauge invariance:} For $\Gamma\in\cC^\infty_c(\R^4,\R)$,
        we require $j_{A+\partial \Gamma}(F)=j_A(F)=j_A(F+\partial\Gamma)$;
    \item\label{Reference current} \emph{Reference current:} Should $\Aref$ be
        sufficiently close to zero to allow for $A=0$, then we require the vacuum
        expectation of the current to vanish in this case, i.e., $j_{0}(F)=0$.
\end{enumerate}

In order to derive an explicit expression for $c$ for \ref{Exterior derivative}, cf.~\eqref{eq_dj}, we start by computing $\widetilde j_A(F)$
in~\eqref{eq_current}. By~\eqref{eq_vacuum_exp},~\eqref{eq_right_op}, and~\eqref{eq_lift} we find
\begin{align}
    \widetilde{j}_A(F)
    =&
    \re
    i\partial_F
    \det \left[
        {(\sA^\inv \sAf)}_\Pmm
        (\sAf^\inv \sB)_\Pmm
        ((\sA^\inv\sB)_\Pmm)^\inv
        \right]
    \notag
    \\
    &\times \det |(\sAf^\inv \sB)_\Pmm|^\inv
    \det |(\sA^\inv\sB)_\Pmm|
    \notag
    \\
    =&
    \re
    i\partial_F
    \det \left[
        (\sA^\inv \sAf)_\Pmm
        (\sAf^\inv \sB)_\Pmm
        (\sB^\inv\sA)_\Pmm
        \right]
    \notag
    \\
    &\times \det |(\sAf^\inv \sB)_\Pmm|^\inv
    \det |(\sA^\inv\sB)_\Pmm|^\inv
    \,,
\end{align}
where we have used $|U_\Pmm| (U_\Pmm)^\inv  = |U_\Pmm|^\inv (U^\inv)_\Pmm$ for $U=\sA^\inv\sB$ and have performed a cyclic permutation
under the determinant by a corollary of Lidskii’s theorem~\cite{simon2005trace}. For $F=0$ the product of the three
determinants equals one because $\det (\sA^\inv \sA)_\Pmm (\sA^\inv \sB)_\Pmm (\sB^\inv\sA)_\Pmm$ equals $\det
|(\sB^\inv\sA)_\Pmm|^2$. This allows to recast the above expression as $\widetilde{j}_A(F) =i\partial_F \log
\Gamma_{\sAf \; \sB \; \sA}$, using the notation $\Gamma_{XYZ}:=\arg\det [(Z^\inv X)_\Pmm(X^\inv Y)_\Pmm (Y^\inv
Z)_\Pmm]$. These terms $\Gamma_{XYZ}$ have convenient properties summarised in the appendix in Lemma~\ref{lem_Gamma}.
Exploiting those and the chain rule, the exterior derivative $c=d\widetilde j$, cf.~\eqref{eq_dj}, can be expressed as
follows, using the notations $S^{X}_{Y}:=S_X^\inv S_{Y}=(S^Y_X)^\inv$
and $\arg z:=z/|z|$ for $z\in\C\setminus\{0\}$:
\begin{align}
    c_A(G,F)
    &=
    i\partial_G \partial_F \log \Gamma_{\sAfg \; \sB \; \sAg}
    - i\partial_F \partial_G \log \Gamma_{\sAfg \; \sB \; \sAf}
    \notag
    \\
    &= 2i \partial_F\partial_G \log \Gamma_{\sAf\; \sB \; \sAg}
    \qquad \text{by Lem.~\ref{lem_Gamma}, prop.~2.}
    \notag
    \\
    &= 2i \partial_F\partial_G \log \Gamma_{\sAf\; \sA \; \sAg}
    \hskip1.1cm \text{by Lem.~\ref{lem_Gamma}, prop.~4.}
    \notag
    \\
    &= 2i \partial_F\partial_G \log \arg \det
    [
        (S^{A+G}_{A+F})_\Pmm (
        (S^{A+F}_{A+G})_\Pmm
        -
        (S^{A+F}_A)_\Pmp
        (S^{A}_{A+G})_\Ppm
        )
    ]
    \notag
    \\
    &= 2i \partial_F\partial_G \log \arg \det [
        \id_\Pmm - |(S^{A+G}_{A+F})_\Ppm|^2
        - (S^{A+G}_{A+F})_\Pmm 
        (S^{A+F}_{A})_\Pmp 
        (S^{A}_{A+G})_\Ppm 
    ]
    \notag
    \\
    &= -2 \partial_F\partial_G \im \Tr [ (S^A_{A+F})_\Pmp (S^A_{A+G})_\Ppm ]
    \,
      \label{eq_tracewurm}
\end{align}
Recalling~\eqref{eq_dj}, in order to get our hands on a physically relevant
candidate current $j$ in the sense of above, we need to split $c$ as follows
\begin{align}
    c_A(G, F) =
    \partial_G j_{A+G}(F) - \partial_F j_{A+F}(G)
    \,.
    \label{eq_non_perturb_splitting}
\end{align}
Note that the left-hand side is given as the non-perturbative expression~\eqref{eq_tracewurm}. For the purpose of the
splitting, we make the following ansatz. Suppose the vacuum expectation value of the current $A\mapsto j_A$ is real
analytic and has a power expansion of the form
\begin{align}
    j_A(F)  
    = 
    \sum_{n=2}^\infty  j^{(n)}(F; \underbrace{A,\dots, A}_{n-1\text{ many}}) \, ,
    \label{eq_j_series}
\end{align}
where the $n$-th summand on the right-hand side is assumed to be 
linear in all $n$ arguments and symmetric in the last $n-1$ arguments.
The same summand corresponds to the $(n-1)$-st Taylor order in $A$. In view of condition \ref{Reference current} there is no
$n=1$ summand. 
This implies the expansion
\begin{align}
    c_A(G,F) &= \sum_{n=2}^\infty c^{(n)}(G,F;\underbrace{A,\dots, A}_{n-2\text{ many}}) \, , \qquad \text{for}
    \\
    \label{eq_splitting_equation}
    c^{(n)}(G,F;A,\dots, A)
    &=
    (n-1)
    \big(
    j^{(n)}(F;G,A,\dots, A)
    -
    j^{(n)}(G;F,A,\dots, A)
    \big) \, .
\end{align}
In forthcoming works, we will justify the ansatz, i.e., the analyticity assumption, provide a non-perturbative form
for the current and, on its basis, construct the corresponding scattering matrix and time-evolution. In this article, we
will only demonstrate how to perform the splitting \eqref{eq_non_perturb_splitting} for the lowest order term $n=2$, in order
to fulfil \ref{Exterior derivative}, and check the remaining \ref{Causality}-\ref{Reference current}.

\section{Second order perturbation without infinities}

In view of \ref{Exterior derivative}, cf.~\eqref{eq_splitting_equation} for the lowest order $n=2$, we need
to find an expression $j^{(2)}$ fulfilling
\begin{align}
    c^{(2)}(G,F)
    =
    j^{(2)}(F;G)
    -
    j^{(2)}(G;F)
    \label{eq_splitting_2}
\end{align}
together with conditions \ref{Causality}-\ref{Reference current}. Not
surprisingly, it will coincide with the well-known expression for the
second order perturbation of the current of QED; e.g.~\cite{Dyson2006}. However, in text-books, the latter is extracted
from a mathematically non-rigorous computation involving infinities that are removed by hand. In what follows, we give
this computation a mathematical sense.

We obtain the following finite Lebesgue integral for $c^{(2)}(G,F)=c_0(G,F)$,
cf.~\eqref{eq_tracewurm}, after employing \eqref{eq_comparison_dyn} in order to
compute the linearisation of $S_{A+F}^A$, and furthermore, a suitable partial integration in
time to allow for the application of Fubini's theorem:
\begin{align}
    &
    \hskip-0.25cm
    c^{(2)}(G,F) 
    = 
    i \, \partial_G \partial_F \Tr \left(
    P_- S^A_{A+F} P_+ S^A_{A+G} P_-
    -
    P_- S^A_{A+G} P_+ S^A_{A+F} P_-
    \right)
    {=}
    \hskip-0.3cm
    \int\limits_{\R^3\times\R^3} \!\!\!\! d^3p \, d^3q \,
    c^{(2)}(G,F;\mathbf q,\mathbf p)
    \nonumber
    \\
    &
    \text{with }\; c^{(2)}(G,F;\mathbf q,\mathbf p)
    :=
    \sum_{\tau=\pm 1} \tau \cdot
    (2\pi i)^2 \Res_{\substack{p^0=-\tau E(\mathbf p),\\q^0=\tau E(\mathbf q)}}
    \omega(G,F;q,p)\,,\;
    E(\mathbf{p}):=\sqrt{m^2+\mathbf{p}^2},
    \label{eq_c2_kernel}
    \\
    &
    \omega(G,F;q,p) 
    :=
    \frac{i e^2}{(2\pi)^4} 
    \tr [(\slashed p-m)^{-1} \slashed{F}(p-q) (\slashed q-m)^{-1} \slashed{G}(q-p)]
    =
    \omega(F,G;p,q)
    \, .
    \nonumber
\end{align}
By abuse of notation, we denote the Fourier transform of $x\mapsto F_\mu(x)$ by the same symbol $k\mapsto
F_\mu(k)=(2\pi)^{-4/2}\int d^4x\, e^{ik_\nu x^\nu} F_\mu(x)$. Furthermore, $\Res_{q^0=\cdot,p^0=\cdot}$ denotes the
iterated residue operator.

In order to identify the current term $j^{(2)}$ in \eqref{eq_splitting_2}, we first regard the expression
$\omega(q^0,p^0)$ for fixed $\mathbf q,\mathbf p\in\R^3$. In a first step, we add zero, written as a difference of two
equal residues, i.e.,
\begin{align}
    &
    c^{(2)}(G,F;\mathbf q,\mathbf p)
    =\sum_{\sigma=\pm 1}
    (2\pi i)^2 
    \left(
        \Res_{\substack{p^0=- E(\mathbf p),\\q^0= \sigma E(\mathbf q)}}
        -
        \Res_{\substack{p^0=\sigma E(\mathbf p),\\q^0= - E(\mathbf q)}}
    \right)
    \omega(G,F;q,p)
    \nonumber
    \\
    &=
    (2\pi i)
    \Bigg(
        \int\limits_{[\R-i\delta]-[\R+i\delta]}dq^0\,
        \Res_{p^0=-E(\mathbf p)}
        -
        \int\limits_{[\R-i\delta]-[\R+i\delta]}dp^0\,
        \Res_{q^0= -E(\mathbf q)}
    \Bigg)
    \omega(G,F;q,p)
    \nonumber
    \\
    &=
    c^{(2)}_+(G;F;\mathbf q,\mathbf p)
    -
    c^{(2)}_+(F;G;\mathbf p,\mathbf q)
    \label{eq_splitting}
\end{align}
for any fixed number $\delta>0$ and
\begin{align}
    c^{(2)}_+(F;G;\mathbf p,\mathbf q)
    :=
    -2\pi i
    \Bigg(
    \int\limits_{[\R+i\delta]}dq^0\,
    \Res_{p^0= - E(\mathbf p)}
    +
    \int\limits_{[\R-i\delta]}dp^0\,
    \Res_{q^0= - E(\mathbf q)}
    \Bigg)
    \omega(F,G;p,q) \,.
    \label{eq_c2_eq_diffj}
\end{align}
The bracket notation $[\R\pm i\delta]$ denotes the standard parametrisation $\R\ni t\mapsto t\pm i\delta$ and their
differences are understood as $\int_{[A]-[B]}=\int_{[A]}-\int_{[B]}$. We remark that $c^{(2)}_+$ has temporally causal
support in the sense that, for $F,G\in\cC^\infty_c(\R^4,\R^4)$ such that the support of $F$ is temporally earlier than
the support of $G$, we have $c^{(2)}_+(F;G;\mathbf p,\mathbf q)=0$. This can be seen by inserting the Fourier
transformations of $F$ and $G$ in the time variable, observing that $\C\setminus\R\ni p^0\mapsto (\slashed p-m)^{-1}$ is
holomorphic, and applying Cauchy's integral theorem in the limit $\delta\to+\infty$. Furthermore, it is convenient to
introduce the substitution $k=(k^0,\mathbf k):=p-q$, i.e.,
\begin{align}
    &c^{(2)}_+(F;G;\mathbf p,\mathbf q)
    =
    -2\pi i
    \int\limits_{[\R-i\delta]} dk^0 \,
    \left(
       \Res_{p^0=-E(\mathbf p)}
       +\Res_{p^0=k^0-E(\mathbf p-\mathbf k)}
    \right) \omega(F,G;p,p-k) 
    \label{eq_c2plus}
    \\
    &=
    - \!\! \int\limits_{[\R-i\delta]} \!\! dk^0 \!\! \int\limits_{\cC_{\text{Wick}}(k)} \!\! dp^0 \,
    \omega(F,G;p,p-k) 
    =
    -\frac{ie^2}{(2\pi)^4} 
    \int\limits_{[\R-i\delta]} dk^0 
    F_\mu(k)G_\nu(-k)
    \int\limits_{\cC_{\text{Wick}}(k)} dp^0 \,
    \omega^{\mu\nu}(p,k)
    \nonumber
    \\
    &\hskip3cm \text{with} \qquad \omega^{\mu\nu}(p,k)
    :=
    \tr\left[
        \gamma^\nu
        (\slashed p-m)^{-1} 
        \gamma^\mu
        (\slashed p-\slashed k-m)^{-1} 
        \right] \,.
    \label{eq_Pi_pk}
\end{align}
Here, the contour $\cC_{\text{Wick}}(k)$ denotes any closed curve having winding number $+1$ around $p^0=-E(\mathbf p)$
and $p^0=k^0-E(\mathbf p-\mathbf k)$, but winding number zero around $p^0=+E(\mathbf p)$ and $p^0=k^0+E(\mathbf
p-\mathbf k)$. We remark that the inner integral of \eqref{eq_c2plus}, for fixed $\mathbf p,\mathbf k$ and taken as function of
$k^0$, is holomorphic on 
\begin{align}
    k^0\in \mathbb{D}:=\C\setminus((-\infty,-2m] \cup [2m,\infty)).
    \label{eq_hol_set}
\end{align}
In particular, for $k\in\C^4$ sufficiently
close to zero, $\omega^{\mu\nu}(p,k)$ is well-defined for all $p\in i\R\times\R^3$. In this region the contour
$\cC_{\text{Wick}}(k)$ can be replaced by $i\R$ oriented in positive imaginary direction, exploiting the 
$|p^0|^{-2}$ decay of $\omega^{\mu\nu}(p,k)$ for $|p^0|\to\infty$.

Considering~\eqref{eq_splitting_2}, if $c^{(2)}_+(F;G;\mathbf p,\mathbf q)$ was integrable in $(\mathbf p,\mathbf
q)\in\R^3\times\R^3$, its integral would have been a natural candidate for $j^{(2)}(F;G)$ due to the support
properties discussed above. But this is in general not the case. To nevertheless find a suitable candidate by leveraging the knowledge of
$c^{(2)}_+$, we take a different approach: It will turn out that second derivative
$\partial^2_{m^2}c^{(2)}_+(F;G;\cdot,\cdot)$ belongs to $L^1$, and therefore, with a function $\Pi^{\mu\nu}$ to be
found, we shall search instead for a candidate of the form
\begin{align}
    &j^{(2)}(F;G)
    =
    -\frac{ie^2}{(2\pi)^4} 
    \int\limits_{[\R-i\delta]\times\R^3} d^4k \, F_\mu(k) G_\nu(-k) \Pi^{\mu\nu}(k) \,,
    \label{eq_def_j2}
    \\
    &\hskip3cm\text{fulfilling}
    \quad
    \partial^2_{m^2}
    j^{(2)}(F;G) 
    =
    \int\limits_{\R^3\times\R^3} d^3p \, d^3q \,
    \partial^2_{m^2}
    c^{(2)}_+(F;G;\mathbf p,\mathbf q)
    \,.
    \label{eq_def_j2_2}
\end{align}
Throughout the article, we have suppressed the $m$ dependence in the notation. In the end, we shall integrate twice with
respect to $m^2$; taking the second derivative with respect to $m^2$ rather than $m$ is only a technical convenience.
This can be seen as a variant of the Pauli-Villars regularisation scheme \cite{PauliVillars49} with differences replaced by integrals of
derivatives. The expression for $\Pi^{\mu\nu}$ will finally be identified as \eqref{eq_def_pi_munu}. 

\section{The explicit expression for the second order}

Having in mind the goal~\eqref{eq_def_j2_2} with $c^{(2)}_+$ as given in~\eqref{eq_c2plus} we observe that
even for $k$ close to zero, $\omega^{\mu\nu}(p,k)$ in~\eqref{eq_Pi_pk} is not Lebesgue-integrable in
$p\in i\R\times\R^3$. However, we have the Lebesgue integrals
\begin{align}
    \label{eq_Pi_integral}
    \int\limits_{\mathcal{C}_{\text{Wick}}(k)\times\R^3} d^4p \,
    \partial_{m^2}^2
    \omega^{\mu\nu}(p,k)
    \overset{\text{for $k$}}{\underset{\text{close to $0$}}{=}}
    \int\limits_{i\R\times\R^3} d^4p \,
    \partial_{m^2}^2
    \omega^{\mu\nu}(p,k)\,.
\end{align}
Note that by dominated convergence, the right-hand side in \eqref{eq_Pi_integral} is a real-analytic function of
$\mathbf{k}\in\R^3$ and a holomorphic function of $k^0\in\mathbb{D}$; cf.~\eqref{eq_hol_set}.  

In order to avoid the explicit computation of the second derivative, we
introduce an artificial scaling parameter $\epsilon$ on which the integral shall depend meromorphically. The original
integral~\eqref{eq_Pi_integral} is then recovered for $\epsilon\to 0$. However, by virtue of the identity theorem for
analytic functions, the value at $\epsilon=0$ is determined by values for $\epsilon$ in any non-empty, open interval $I$
of reals, which need not be in the vicinity of zero. It will turn out that $I$ can be chosen so that the differential
operator $\partial_{m^2}^2$ can be interchanged with the integral by dominated convergence without losing
Lebesgue-integrability for $\epsilon\in I$. A scaling exponent $\epsilon$ can be introduced conveniently after recasting
the integral by means of Feynman's parametrisation. This computation procedure can be seen as a mathematically rigorous
version of the method of dimensional regularisation employed in physics \cite{THOOFT1972189,Bollini72}. We interpret the
attribute ``dimensional'' as a scaling exponent $\epsilon$, which may be any real or even complex number, but not as the
number of elements in a basis of a vector space, which would be a natural number. Next, we shall demonstrate this
procedure for the second order term, which can be seen as a rigorous version of the computation in \cite[p.~70ff:
Polarization of the Vacuum]{Dyson2006}.

In order to arrive at an explicit expression for the second perturbation order
of the current, we employ the Feynman parametrisation $(ab)^{-1}=\int_0^1
((1-z)a+zb)^{-2}\,dz$, which holds for all $a,b\in\C$ such that the straight
line from $a$ to $b$ does not contain $0$. For $k\in\C^4$ sufficiently close to
zero, $p\in i\R\times\R^3$ we may take $a=p^2-m^2$ and
$b=(p-k)^2-m^2$ and recast~\eqref{eq_Pi_integral} into
\begin{align}
    & 
    \int\limits_{i\R\times\R^3} \!\!\!\!\! d^4p \,
    \partial_{m^2}^2
    \frac{
        \tr[
            \gamma^\nu (\slashed p+m) \gamma^\mu (\slashed p-\slashed k + m)
        ]}
        {
            (p^2-m^2)((p-k)^2-m^2)
        }
        =\!\!\!\!\!
        \int\limits_{i\R\times\R^3} \!\!\!\!\! d^4p \,
        \partial_{m^2}^2
        \!\!\int\limits_0^1 \!\! dz
        \frac{
            \tr[
                \gamma^\nu (\slashed p+m) \gamma^\mu (\slashed p-\slashed k + m)
            ]}
            {
                \left[(1{-}z)(p^2-m^2) + z((p-k)^2-m^2)\right]^2
            }
            \notag
            \\
            &=
            \int\limits_0^1 dz
            \int\limits_{i\R\times\R^3} d^4p \,
            \partial_{m^2}^2
            \frac{
                \tr[
                    \gamma^\nu (\slashed p+m) \gamma^\mu (\slashed p-\slashed k + m)
                ]}
                {
                    \left[(1{-}z)(p^2-m^2) + z((p-k)^2-m^2)\right]^2
                }\,,
\end{align}
where in the last step we have used dominated convergence to interchange
$\partial^2_{m^2}$ and $\int dz$ and the Lebesgue-integrability to
interchange the integrals. For the sake of the computation, we may restrict
ourselves to $k\in i\R\times\R^3$; the rest of the computation does not even
require $k$ to be close to zero. Next, employing the substitution $q=p-kz$,
which is $m$-independent, to find 
\begin{align}
    {\eqref{eq_Pi_integral}}
    =
    \int\limits_0^1 dz
    \int\limits_{i\R\times\R^3} d^4q \,
    \partial_{m^2}^2
    \frac{\tr[
        \gamma^\nu(\slashed q+z\slashed k+m)
        \gamma^\mu(\slashed q + (z{-}1)\slashed k +m)
    ]}
    {
        [m^2 - (z{-}z^2)k^2 - q^2]^2\,
    },
\end{align}
and evaluating the trace, using $\tr \gamma^\nu\gamma^\mu = 4g^{\mu\nu}$,
$\tr\gamma^\nu\gamma^\mu\gamma^\sigma=0$, and
$\tr\gamma^\nu\gamma^\sigma\gamma^\mu\gamma^\tau =
4(g^{\mu\sigma}g^{\nu\tau}-g^{\mu\nu}g^{\sigma\tau}+g^{\mu\tau}g^{\sigma\nu})$,
as done, e.g., in~\cite[Equations (377)-(379)]{Dyson2006}, and dropping terms of the
form $\int d^4q \, q_\mu k^\mu f(q^2)=0$, results in
\begin{align}
    &{\eqref{eq_Pi_integral}}
    =
    \int\limits_0^1 dz
    \int\limits_{i\R\times\R^3} d^4q 
    \,
    \partial_{m^2}^2
    f^{\mu\nu}_{k,m}(q,z)
    \\
    &\text{with}
    \qquad
    f^{\mu\nu}_{k,m}(q,z)
    :=
    4\frac{
        2q^\mu q^\nu - 2k^\mu k^\nu(z{-}z^2) + g^{\mu\nu} (m^2+k^2(z{-}z^2)-q^2) 
    }{
        [m^2 - (z{-}z^2)k^2 - q^2]^{2}
    }
    \,.
\end{align}
Due to the Minkowski inner-product, we have $-k^2\geq 0$, $-q^2\geq 0$. Note further
that $f^{\mu\nu}_{q,k,z,m}$ behaves as $O_{|q|\to\infty}(|q|^{-2})$ and
$O_{|q|\to 0}(1)$ uniformly for $z\in[0,1]$, $k$ in a compact domain in
$i\R\times\R^3$, and also $m$ in a compact domain in $\R^+$.

As announced above, we will now introduce an artificial scaling
$(-q^2/u)^\epsilon$ with a complex exponent $\epsilon$ and another parameter
$u$ carrying the unit of $m^2$; recall that $\hbar=1=c$. The original
term~\eqref{eq_Pi_integral} shall be retrieved in the limit $\epsilon\to 0$.
\begin{align}
    {\eqref{eq_Pi_integral}}
    =
    \lim_{\epsilon\to 0}
    \int\limits_0^1 dz
    \int\limits_{i\R\times\R^3} d^4q 
    \,
    \big(\frac{-q^2}{u}\big)^\epsilon 
    \,
    \partial_{m^2}^2
    f^{\mu\nu}_{k,m}(q,z)
    \label{eq_lim_Pi_integral}
\end{align}
Note that the integrand is an Lebesgue-integrable, holomorphic function of
$\epsilon$ for $-2<\re\epsilon<+1$ which allows to interchange the limit as
$\epsilon\to 0$ with the integral by dominated convergence. Even when dropping
the $\partial^2_{m^2}$ operator, it remains so for the smaller domain $-2<\re
\epsilon < -1$, which does, however, not contain zero any more. First, we will
ignore this complication in the computations and work with smaller domain but,
in the end, recover a domain that contains $\epsilon=0$ by analytic
continuation. On the smaller domain, using dominated convergence once again, we
find
\begin{align}
    \int\limits_0^1 dz
    \int\limits_{i\R\times\R^3} d^4q 
    \,
    \big(\frac{-q^2}{u}\big)^\epsilon 
    \partial_{m^2}^2
    f^{\mu\nu}_{k,m}(q,z)
    =
    \int\limits_0^1 dz\,
    \partial_{m^2}^2
    \int\limits_{i\R\times\R^3} d^4q 
    \,
    \big(\frac{-q^2}{u}\big)^\epsilon 
    f^{\mu\nu}_{k,m}(q,z)\,.
    \label{eq_exchange_int_partial_m}
\end{align}
We will now evaluate the inner integral for $-2<\re \epsilon<-1$. 
The following integral 
\begin{align}
    I(\zeta, \eta) := 
    {u^{-\eta}} \int\limits_{i\R\times\R^3} d^4q
    \frac{(-q^2)^\eta}{(\zeta u -q^2)^2}
    =
    \int\limits_{i\R\times\R^3} \frac{d^4q}{u^2}
    \frac{\big(\frac{-q^2}{u}\big)^\eta}{(\zeta-q^2/u)^2}
    \,,
    \quad
    \text{for }
    \zeta>0, 
    \re \eta \in (-2,0)
    \,,
    \label{eq_I_integral}
\end{align}
will play an important role in these calculations. The decoration
by factors of powers of $u$ renders the expression~\eqref{eq_I_integral}
dimensionless. By euclidean symmetry of the complexified Minkowski
inner-product on $i\R\times\R^3$ and for $\zeta>0$, we have
\begin{align}
    \int\limits_{i\R\times\R^3} d^4q 
    \,
    \big(\frac{-q^2}{u}\big)^\epsilon
    \frac{
        q^\mu q^\nu
    }{
        [\zeta u - q^2]^2
    }
    =
    \int\limits_{i\R\times\R^3} d^4q 
    \,
    \big(\frac{-q^2}{u}\big)^\epsilon
    \frac{
        \frac14 g^{\mu\nu} q^2
    }{
        [\zeta u - q^2]^2
    }
    = -\frac u 4 g^{\mu\nu} I(\zeta, 1+\epsilon)
    \,,
\end{align}
Using~\eqref{eq_exchange_int_partial_m},~\eqref{eq_I_integral}, and
$\zeta:=(m^2-(z-z^2)k^2)/u$, we recast the inner integral of~\eqref{eq_lim_Pi_integral} into
\begin{align}
    F^{\mu\nu}_{k,m}(z,\epsilon) 
    &:=
    \int\limits_{i\R\times\R^3} d^4q 
    \,
    \big(\frac{-q^2}{u}\big)^\epsilon 
    f^{\mu\nu}_{k,m}(q,z)
    \label{eq_lin_comb_I}
    \\
    &=
    2 g^{\mu\nu} u I(\zeta,1+\epsilon) + 4[(g^{\mu\nu}k^2-2k^\mu
    k^\nu)(z{-}z^2)+g^{\mu\nu}m^2] I(\zeta,\epsilon) \,.
    \nonumber
\end{align}
Next, we evaluate the integral $I(\zeta,\eta)$ for $\zeta>0$ and $-2 < \re
\eta < 0$, i.e.,
\begin{align}
    I(\zeta,\eta) 
    &= 2i\pi^2 u^{-\eta}\int\limits_0^\infty \frac{r^{3+2\eta}}{(\zeta u + r^2)^2} dr
    = i\pi^2 \zeta^\eta \operatorname{B}(\eta+2,-\eta)  
    = i\pi^2 \zeta^\eta \Gamma(2+\eta)\Gamma(-\eta)
    \notag
    \\
    &= i\pi^2 \zeta^\eta (1+\eta) \Gamma(1+\eta)\Gamma(-\eta)
    = i\pi^2 \zeta^\eta \frac{\pi (1+\eta)}{\sin(\pi(1 + \eta))}
    \label{eq_I_integral_euler}
    \,,
\end{align}
where $\operatorname{B}$ denotes the beta function and, in the last step, Euler's
reflection formula was used. For all given $\zeta>0$, the right-hand side of~\eqref{eq_I_integral_euler} implies that the function
$(-2,-1)+i\R\ni\eta\mapsto I(\zeta,\eta)$ has a meromorphic extension on the
whole complex plane with poles of first order at most on $\Z\setminus\{-1\}$. We denote
it by the same symbol $I$. Hence, $F^{\mu\nu}_{k,m}(z,\epsilon)$ in~\eqref{eq_lin_comb_I} also has a holomorphic extension for
$\epsilon\in\C\setminus\Z$ which will again be denoted by the same symbol. It
is important to note that this extends the originally much smaller domain of $\epsilon\in
(-2,-1) + i\R$, which did not even include a neighbourhood of $\epsilon=0$.
Given $\zeta=(m^2-(z{-}z^2)k^2)/u>0$, this extension now allows to expand for
$\epsilon\to0$ as follows:
\begin{align}
    &I(\zeta, 1+\epsilon) 
    = 
    2i\pi^2 \zeta \left(
    \frac1\epsilon + \log\zeta + \frac12 + O^\zeta_{\epsilon\to 0}(\epsilon)
    \right)\,, \;
    I(\zeta, \epsilon) 
    = 
    -i\pi^2 \left(
    \frac1\epsilon + \log\zeta + 1 + O^{\zeta}_{\epsilon\to 0}(\epsilon)
    \right)\,,
    \nonumber\\
    &F^{\mu\nu}_{k,m}(z,\epsilon) 
    = 
    4  
    i\pi^2  [g^{\mu\nu} m^2-g^{\mu\nu} k^2(z{-}z^2)] \left(
    \frac1\epsilon + \log\zeta + \frac12 
    \right)
    \nonumber\\
    &
    \hskip2.5cm - 4i\pi^2 [g^{\mu\nu}m^2+(g^{\mu\nu}k^2-2k^\mu k^\nu)(z{-}z^2)] 
    \left(
    \frac1\epsilon + \log\zeta + 1 
    \right)
    +
    O^{z,k,m}_{\epsilon\to 0}(\epsilon)
    \nonumber\\
    &=
    8i\pi^2(k^\mu k^\nu-g^{\mu\nu}k^2)(z{-}z^2)
    \left(\frac{1}{\epsilon}+\log\zeta\right)
    +2i\pi^2[4k^\mu k^\nu-g^{\mu\nu}(m^2+3k^2)](z{-}z^2)
    +
    O^{z,k,m}_{\epsilon\to 0}(\epsilon)      
    \,.
    \nonumber
\end{align}
We emphasise that the remainder $O^{z,k,m}_{\epsilon\to 0}(\epsilon)$ is
uniform in the parameters $\zeta$ and $k,m,z$, respectively, as long as they
are restricted to compact domains. Moreover, it is smooth in $m$ with
$\partial^n_{m^2}O^{z,k,m}_{\epsilon\to 0}(\epsilon)=O^{z,k,m}_{\epsilon\to
0}(\epsilon)$, $n\in\N$. Hence, the expressions in~\eqref{eq_lin_comb_I}
and, exploiting uniformity and smoothness of the remainder term, also
the expression in~\eqref{eq_exchange_int_partial_m}, taken
on the smaller domain $-2<\re\epsilon<-1$,
has a
holomorphic extension for $\epsilon\in\C\setminus\Z$ given by
\begin{align}
    &\int\limits_0^1 dz\,\partial_{m^2}^2F^{\mu\nu}_{k,m}(z,\epsilon)
    =
    \partial_{m^2}^2
    \Pi^{\mu\nu}(k)
    +O^{k,m}_{\epsilon\to 0}(\epsilon)
    \label{eq_hol_ext}
    \quad \text{with} \\
    &\qquad
    \Pi^{\mu\nu}(k):=
        8i\pi^2(k^\mu k^\nu-g^{\mu\nu}k^2)\int\limits_0^1 dz\,(z{-}z^2)
        \left[
            \log\left(1-(z{-}z^2)\frac{k^2}{m^2}\right)+\log\frac{m^2}{u}
        \right]
    \label{eq_def_pi_munu}
    \, . 
\end{align}
The last expression shows that the isolated singularity at $\epsilon=0$ is removable. Applying the identity theorem for
holomorphic functions guarantees that the left-hand side of~\eqref{eq_exchange_int_partial_m} is given by the expression on
the left-hand side in~\eqref{eq_hol_ext} for $-2<\epsilon<+1$, which is a neighbourhood of $\epsilon=0$. Note that thanks
to the derivative $\partial_{m^2}^2$, the term is independent of $u$, as it must be. Furthermore, the
$1/\epsilon$-singularity has dropped out thanks to the same derivative, which is consistent with the fact that the
left-hand side of~\eqref{eq_exchange_int_partial_m} is holomorphic near $\epsilon=0$.  Evaluating the holomorphic
extension at $\epsilon=0$, we have verified that the quantity $\Pi^{\mu\nu}(k)$ defined in \eqref{eq_def_pi_munu}
fulfils 
\begin{align}
    \partial_{m^2}^2 \Pi^{\mu\nu}(k) = 
    \int\limits_{\mathcal{C}_{\text{Wick}}(k)\times\R^3} d^4p \,
    \partial_{m^2}^2
    \omega^{\mu\nu}(p,k)
    \label{eq_pi_munu}
\end{align}
for $k\in i\R\times\R^3$. Interpreting the logarithm in \eqref{eq_def_pi_munu} as its principal value
$\log:\C\setminus\R^-_0\to\C$, the function $\Pi^{\mu\nu}$ on $i\R\times\R^3$ has a holomorphic extension to
$\{k\in\C^4|\,k^2\notin\R\text{ or } k^2<4m^2\}$ which contains a neighbourhood of $0\in\C^4$ as well as the sets $\mathbb{D}\times\R^3$ and
$\R^4+i\overline{\operatorname{Causal}}$, where
$\overline{\operatorname{Causal}}=\{p\in\R^4:\,p^2\ge 0\}$ denotes the set of time-like or light-like four vectors.
This implies \eqref{eq_pi_munu} also for $k\in\mathbb{D}\times\R^3$ by virtue of the identity theorem for analytic functions. Note further that $\Pi^{\mu\nu}(k)$ is of the order
$O^{m,u}_{|k|\to\infty}(|k|^2 \log|k|)$ and thus allows the integral on the right-hand side of \eqref{eq_def_j2} to be
well-defined and independent of $\delta>0$.

It is left to show \eqref{eq_splitting_2}. For this purpose, we claim
\begin{align}
    \partial^2_{m^2}
    \left(
        j^{(2)}(F;G)
        -
        j^{(2)}(G;F)
        -
        c^{(2)}(G,F)
    \right)
    =
    0
    \,.
    \label{eq_splitting_derivative}
\end{align}
To see this, we recall~\eqref{eq_def_j2} and regard
\begin{align}
    &\partial^2_{m^2}
    j^{(2)}(F;G) 
    \overset{\eqref{eq_pi_munu}}{=}
    -
    \!\!\!\!
    \int\limits_{[\R-i\delta]\times\R^3} d^4 k\,
    F_\mu(k)G_\nu(-k) 
    \int\limits_{\mathcal{C}_{\text{Wick}}(k)\times\R^3} d^4p \,
    \partial_{m^2}^2
    \omega^{\mu\nu}(p,k)
    \\
    &=
    -
    \!\!\! \int\limits_{\R^3\times\R^3} \!\!\! d^3k d^3p \,
    \partial_{m^2}^2
    \!\!\! \int\limits_{[\R-i\delta]} \!\!\! dk^0 \,
    F_\mu(k)G_\nu(-k) 
    \!\!\! \int\limits_{\mathcal{C}_{\text{Wick}}(k)} \!\!\! dp^0 \,
    \omega^{\mu\nu}(p,k)
    =
    \!\!\! \int\limits_{\R^3\times\R^3} \!\!\! d^3p \, d^3q \,
    \partial^2_{m^2}
    c^{(2)}_+(F;G;\mathbf p,\mathbf q)\,,
    \nonumber
\end{align}
where the employed commutation of the integrals and differential operators is justified by the integrability of the
integrand, locally uniform in $m$. Inserting equation \eqref{eq_splitting} proves the claim \eqref{eq_splitting_derivative}.
Furthermore, \eqref{eq_splitting_derivative} implies
\begin{align}
    j^{(2)}(F;G)
    -
    j^{(2)}(G;F)
    -
    c^{(2)}(G,F)
    =
    a m^2 + b,
\end{align}
with two constants $a,b$ that depend on $F,G$ but are independent of $m$. Note that $j^{(2)}(F;G) - j^{(2)}(G;F)\to 0$
for $m\to\infty$ because the scaling term $\log(m^2/u)$ cancels in the difference; see \eqref{eq_def_pi_munu}. Moreover,
we have $c^{(2)}(G,F)\to 0$ for $m\to\infty$, which can be seen from \eqref{eq_c2_kernel}, 
the estimate $|p-q|\geq |p^0-q^0| = E(\mathbf
p)+E(\mathbf q)\geq 2^{-1/2}(|\mathbf p|+|\mathbf q|+2m)$ for $p^0=-\tau E(\mathbf p)$, $q^0=\tau E(\mathbf q)$, and
$\tau=\pm 1$, implied by Cauchy-Schwarz' inequality, and noting the fact that $F,G$ decay
super-algebraically fast in energy-momentum space. This guarantees $a=0=b$, i.e., \eqref{eq_splitting_2}. 

\paragraph{Conclusion.} Up to second order in perturbation, we have retrieved a physically relevant family of currents
by~\eqref{eq_def_j2} and~\eqref{eq_def_pi_munu}, depending on an integration constant $C\in\R$, i.e., 
\begin{align}
    j^{(2)}(F;A)
    &=
    \frac{e^2}{2\pi^2} 
    \!\!\!\!\!\!\!\!\!\int\limits_{[\R-i\delta]\times\R^3} \!\!\!\!\!\!\!\!\! d^4k \, F_\mu(k) A_\nu(-k) 
        (k^\mu k^\nu-g^{\mu\nu}k^2)\!\int\limits_0^1 \! dz\,(z{-}z^2)
        \left[
            \log\left(1-(z{-}z^2)\frac{k^2}{m^2}\right)
            + C 
        \right],
    \label{eq_current_family}
\end{align}
because it fulfils the physical relevant conditions
\ref{Causality}-\ref{Reference current} in second order of perturbation:
\begin{enumerate}
    \item[\ref{Relativistic invariance}]\emph{Relativistic invariance}: The Lorentz covariance is apparent from the Lorentz covariance of
        formula \eqref{eq_def_pi_munu}. Translation invariance follows from the fact that the test functions $F,G$
        appear in energy-momentum space only in the form $k\mapsto F_\mu(k)G_\nu(-k)$.
    \item[\ref{Causality}]\emph{Causality}: Temporal causality follows from the discussed temporal causality of $c^{(2)}_+$; see
        \eqref{eq_c2_eq_diffj}. Fully relativistic causality follows from the temporal causality and relativistic
        invariance.
    \item[\ref{Gauge invariance}]\emph{Gauge invariance}: Gauge invariance is apparent from
        formula \eqref{eq_def_pi_munu} noting $k_\mu (k^\mu k^\nu - g^{\mu\nu}k^2)=0$ and the symmetry in $\mu$ and
        $\nu$.
      \item[\ref{Reference current}]\emph{Reference current}: 
        The vanishing reference current $j_{A=0}=0$
        is already build into the ansatz, see
        \eqref{eq_j_series}, because there is no summand corresponding to $n=1$.
\end{enumerate}
The family of currents~\eqref{eq_current_family} describes the second order term of the vacuum expectation of the
electric current in a prescribed external four-vector potential $G=A$. Concerning the interpretation of the remaining
constant $C$, it is helpful to regard the external current $j_{\text{ext}}^\mu(k) = (k^\mu
k^\nu-g^{\mu\nu}k^2)A^{\text{ext}}_\nu(k)$, associated with the external field $G=A^{\text{ext}}$ by Maxwell's
equations. Changing the integration constant $C$, thus, changes the vacuum polarisation current $j^{(2)}$ in second order of
perturbation by an additional current proportional to $j_{\text{ext}}$. In a self-consistently coupled theory, this
mechanism can be interpreted as to leave the bare electric charge undefined.\\

Be that as it may. Finally and most importantly, we would like to express our gratitude to Detlef for being our
teacher, dear colleague and friend. We miss him dearly.

\appendix

\section{Tetrahedron rule}

Let $A,B,C,D$ be unitary operators on $\cH$ fulfilling~\eqref{eq_ures_invert}
for all $U=X^\inv Y$, $X,Y\in\{A, B, C, D\}$. For such operators we define:
$\Gamma_{ABC}:=\arg\det (A^\inv B)_\Pmm (B^\inv C)_\Pmm (C^\inv A)_\Pmm\in U(1)$

\begin{lemma}\label{lem_Gamma}
    For such operators $A,B,C,D$ we have:
    \textbf{1.} $\Gamma_{ABC}$ is well-defined;
    \textbf{2.} $\Gamma_{ABC}^\inv = \Gamma_{CBA}$;
    \textbf{3.} $\Gamma_{AAB}=1$;\label{tag_tetra_AA}
    \textbf{4.} \label{tag_tetraeder} $\Gamma_{ABC}=\Gamma_{BCD}\Gamma_{DCA}\Gamma_{ABD}$.    
\end{lemma}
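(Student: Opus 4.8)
The plan is to establish the four assertions in order, since each later one leans on the earlier ones. For \textbf{1.}, the point is that each factor $(X^{-1}Y)_\Pmm$ differs from the identity on $\cH^-$ by a trace-class operator: writing $U = X^{-1}Y$, the hypothesis \eqref{eq_ures_invert} gives $U_\Ppm, U_\Pmp \in I_2(\cH)$, and the computation \eqref{eq_one_plus_traceclass} shows $U_\Pmm U^*_\Pmm = \id_\Pmm - |U_\Pmp|^2 \in \id_\Pmm + I_1(\cH^-)$; combined with the invertibility of $U^*_\Pmm$ from \eqref{eq_ures_invert} this forces $U_\Pmm \in \id_\Pmm + I_1(\cH^-)$ as well. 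Hence the product of the three factors $(A^{-1}B)_\Pmm(B^{-1}C)_\Pmm(C^{-1}A)_\Pmm$ lies in $\id_\Pmm + I_1(\cH^-)$, so its Fredholm determinant is well-defined and nonzero (nonzero because each factor is invertible: $(X^{-1}Y)_\Pmm$ is invertible since $(X^{-1}Y)^*_\Pmm = (Y^{-1}X)_\Pmm$ is, by \eqref{eq_ures_invert} applied to both $X^{-1}Y$ and $Y^{-1}X$). Therefore $\arg\det$ of it is a well-defined element of $U(1)$.

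For \textbf{2.}, I would use that $\det$ is multiplicative on $\id + I_1$, that $\det(T^*) = \overline{\det T}$, and that $(X^{-1}Y)^*_\Pmm = (Y^{-1}X)_\Pmm$. Taking the adjoint of $(A^{-1}B)_\Pmm(B^{-1}C)_\Pmm(C^{-1}A)_\Pmm$ reverses the order and conjugates each factor into $(A^{-1}C)_\Pmm(C^{-1}B)_\Pmm(B^{-1}A)_\Pmm$, so $\overline{\det[\cdots]_{ABC}} = \det[\cdots]_{CBA}$, whence $\Gamma_{ABC}^{-1} = \overline{\Gamma_{ABC}} = \Gamma_{CBA}$ since $\Gamma \in U(1)$. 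For \textbf{3.}, with $A=A$, $B=B$ the relevant product is $(A^{-1}A)_\Pmm(A^{-1}B)_\Pmm(B^{-1}A)_\Pmm = (A^{-1}B)_\Pmm(B^{-1}A)_\Pmm = (A^{-1}B)_\Pmm((A^{-1}B)_\Pmm)^*$ is positive definite, hence has positive determinant, so $\Gamma_{AAB}=1$.

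The real work is \textbf{4.}, the cocycle/tetrahedron identity. My approach is to reduce everything to $\cH^-$ and exploit multiplicativity of the Fredholm determinant modulo a correction factor that I will show is positive. Abbreviate $[XY] := (X^{-1}Y)_\Pmm$ acting on $\cH^-$. The naive composition law would be $[XY][YZ] = [XZ]$, but this fails exactly because of the off-diagonal block: from $(X^{-1}Z)_\Pmm = (X^{-1}Y\,Y^{-1}Z)_\Pmm = (X^{-1}Y)_\Pmm(Y^{-1}Z)_\Pmm + (X^{-1}Y)_\Pmp(Y^{-1}Z)_\Ppm$ we get $[XZ] = [XY][YZ] + (X^{-1}Y)_\Pmp(Y^{-1}Z)_\Ppm$, where the last term is trace class. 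The strategy is to write $\Gamma_{ABC} = \arg\det([CA][AB][BC])$ and likewise for the three factors on the right, expand each $[\cdot\,\cdot]$ product over the vertex $D$ using this relation, and track how the determinants combine. Concretely, I expect that the product $[CA][AB][BC]\cdot([AB][BD][DA])^{-1}\cdots$ can, after cyclic rearrangements under the determinant (justified by Lidskii's theorem as in the main text) and insertions of $[XD][DX]\approx\id$, be reduced to a product of manifestly positive-definite operators times the desired determinant; the $\arg$ then kills the positive parts. The main obstacle is bookkeeping: keeping the operators in an order where every partial product still lies in $\id_\Pmm + I_1(\cH^-)$ so that determinants factor, and verifying that the accumulated "error" operators assemble into something with positive determinant (so they drop out of $\arg$). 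A cleaner route, which I would try first, is to interpret $\Gamma$ as a coboundary: one checks that $(X,Y)\mapsto \det[XY]$, suitably normalized, defines a $U(1)$-valued cochain whose coboundary is $\Gamma_{XYZ} = \delta(\cdots)_{XYZ}$; the identity $\Gamma_{ABC}\Gamma_{ABD}^{-1}\Gamma_{ACD}\Gamma_{BCD}^{-1} = 1$ (which is the stated relation rearranged, using \textbf{2.}) is then automatic since $\delta^2 = 0$, provided the cochain is genuinely well-defined on $U(1)$. Establishing that the relevant normalization makes $\det[XY]$ into such a cochain — i.e.\ that the ambiguities are all positive reals and hence invisible to $\arg$ — is where the $\|\id - X^{-1}Y\| < 1$ hypothesis and the positivity arguments from \eqref{eq_one_plus_traceclass} get used, and it is the step I expect to require the most care.
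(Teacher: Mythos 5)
Your arguments for parts 2 and 3 are correct and essentially match the paper. Part 1, however, contains a genuine error, and part 4 is only a plan rather than a proof.

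\textbf{Part 1.} You claim that $U_\Pmm U^*_\Pmm\in\id_\Pmm+I_1(\cH^-)$ together with invertibility of $U^*_\Pmm$ ``forces $U_\Pmm\in\id_\Pmm+I_1(\cH^-)$ as well.'' This implication is false. Take $U_\Pmm$ unitary on $\cH^-$ with $\id_\Pmm-U_\Pmm$ compact but not trace class; then $U_\Pmm U^*_\Pmm-\id_\Pmm=0\in I_1(\cH^-)$ and $U^*_\Pmm$ is invertible, yet $U_\Pmm\notin\id_\Pmm+I_1(\cH^-)$. In fact the paper explicitly notes, just before \eqref{eq_vacuum_exp}, that $U_\Pmm\in\id_\Pmm+I_1(\cH^-)$ does \emph{not} hold in general --- this failure is exactly why the correction operator $R_U$ has to be introduced; if your claim held, the $R_U$ machinery would be superfluous. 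What is true, and what the paper proves, is that only the \emph{triple} product lands in $\id_\Pmm+I_1(\cH^-)$: one writes $(A^\inv B)_\Pmm(B^\inv C)_\Pmm(C^\inv A)_\Pmm=\bigl((A^\inv C)_\Pmm-(A^\inv B)_\Pmp(B^\inv C)_\Ppm\bigr)(C^\inv A)_\Pmm$, observes that $(A^\inv B)_\Pmp(B^\inv C)_\Ppm\in I_1$ as a product of two Hilbert-Schmidt operators, and that $(A^\inv C)_\Pmm(C^\inv A)_\Pmm=\id_\Pmm-(A^\inv C)_\Pmp(C^\inv A)_\Ppm\in\id_\Pmm+I_1(\cH^-)$ by \eqref{eq_one_plus_traceclass}.

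\textbf{Part 4.} You offer two routes --- a bookkeeping expansion over the vertex $D$ which you ``expect'' to reduce to positive factors, and a cohomological framing whose well-definedness you ``expect to require the most care'' --- but you complete neither; the result is a sketch, not a proof. The paper's actual argument is shorter and cleaner than either. Multiply the three factors $\Gamma_{BCD}\Gamma_{DCA}\Gamma_{ABD}$, rewrite each as a single $\arg\det$ with the letter $D$ cycled to the outermost positions via Lidskii's theorem, and combine the three determinants into one (each triple product is in $\id_\Pmm+I_1(\cH^-)$ by part 1). After one more cyclic permutation the adjacent inverse pairs $(X^\inv D)_\Pmm(D^\inv X)_\Pmm=|(D^\inv X)_\Pmm|^2$ appear; each such squared modulus is a positive definite element of $\id_\Pmm+I_1(\cH^-)$, hence has positive Fredholm determinant and drops out of the $\arg$. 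What remains, after a final cyclic rearrangement, is precisely $\Gamma_{ABC}$. There are no inverses of $\Gamma$'s to track and no error terms to cancel, so the ``bookkeeping obstacle'' you anticipate simply does not arise.
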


\begin{proof}
    \textbf{1:} We observe that each pair of operators such
    as $(A^\inv B)_\Pmm$ is invertible because of $\|\id_\Pmm - (A^\inv
    B)_\Pmm\|<1$. Hence, the Fredholm determinant is well-defined and non-zero
    because of $(A^\inv B)_\Pmm (B^\inv C)_\Pmm (C^\inv A)_\Pmm
        =
        \left(
        (A^\inv C)_\Pmm - (A^\inv B)_\Pmp (B^\inv C)_\Ppm 
        \right) 
        (C^\inv A)_\Pmm$
        \\
        $\in (A^\inv C)_\Pmm (C^\inv A)_\Pmm + I_1(\cH^-)
        = \id_{\cH^-} - (A^\inv C)_\Pmp (C^\inv A)_\Ppm + I_1(\cH^-) =
        \id_\Pmm + I_1(\cH^-)$.

    \noindent \textbf{2:} This can be seen by noting
    $\Gamma_{ABC}^\inv=\overline{\Gamma_{ABC}}$ and $\overline{\det X}=\det
    X^*$ for $X\in\id+I_1$.

    \noindent \textbf{3:}
        $\Gamma_{AAB}=\arg\det (A^\inv B)_\Pmm (B^\inv A)_\Pmm
        = \arg\det |(A^\inv B)_\Pmm|^2 = 1$.

    \noindent \textbf{4:}
    Applying Lidskii’s theorem of cyclic permutation under the
    Fredholm determinant we find
    \begin{align}
        &\Gamma_{BCD}\Gamma_{DCA}\Gamma_{ABD}
        =
        \arg\det 
        (D^\inv B)_\Pmm
        (B^\inv C)_\Pmm
        (C^\inv D)_\Pmm
        \arg\det 
        (D^\inv C)_\Pmm
        (C^\inv A)_\Pmm
        (A^\inv D)_\Pmm
        \notag
        \\
        &\times
        \arg\det 
        (D^\inv A)_\Pmm
        (A^\inv B)_\Pmm
        (B^\inv D)_\Pmm
        =
        \arg\det 
        |(D^\inv B)_\Pmm|^2
        \;
        (B^\inv C)_\Pmm
        \;
        |(D^\inv C)_\Pmm|^2
        \notag
        \\&\circ
        (C^\inv A)_\Pmm
        \;
        |(D^\inv A)_\Pmm|^2
        \;
        (A^\inv B)_\Pmm
        =
        \arg\det 
        (B^\inv C)_\Pmm
        \;
        (C^\inv A)_\Pmm
        \;
        (A^\inv B)_\Pmm
        = \Gamma_{ABC}
        \notag
    \end{align}
    because the operators under the square modulus, like $(B^\inv D)_\Pmm
    (D^\inv B)_\Pmm=|(D^\inv B)_\Pmm|^2$ are positive definite operators in
    $\id_\Pmm + I_1(\cH^-)$.
\end{proof}

\bibliographystyle{plain}

\begin{thebibliography}{10}

\bibitem{bar2009quantum}
C.~B{\"a}r and K.~Fredenhagen.
\newblock {\em Quantum field theory on curved spacetimes: Concepts and
  mathematical foundations}, volume 786.
\newblock Lecture Notes in Physics, Springer, 2009.

\bibitem{bar2007wave}
C.~B{\"a}r, N.~Ginoux, and F.~Pf{\"a}ffle.
\newblock {\em Wave equations on Lorentzian manifolds and quantization},
  volume~3.
\newblock European Mathematical Society, 2007.

\bibitem{Bollini72}
C.~G. Bollini and J.~J. Giambiagi.
\newblock Dimensional renorinalization : The number of dimensions as a
  regularizing parameter.
\newblock {\em Il Nuovo Cimento B (1971-1996)}, 12(1):20--26, 1972.

\bibitem{Deckert2010a}
D.-A. Deckert, D.~D\"{u}rr, F.~Merkl, and M.~Schottenloher.
\newblock {Time-evolution of the external field problem in Quantum
  Electrodynamics}.
\newblock {\em {Journal of Mathematical Physics}}, {51}(12):122301, December
  2010.

\bibitem{Deckert2014}
D.-A. Deckert and F.~Merkl.
\newblock {Dirac Equation with External Potential and Initial Data on Cauchy
  Surfaces}.
\newblock {\em Journal of Mathematical Physics}, 55(12):122305, December 2014.

\bibitem{deckert2016external}
D.-A. Deckert and F.~Merkl.
\newblock External field {QED} on {C}auchy surfaces for varying electromagnetic
  fields.
\newblock {\em Communications in Mathematical Physics}, 345(3):973--1017, 2016.

\bibitem{deckert_perspective_2015}
D.-A. Deckert and F.~Merkl.
\newblock A perspective on external field qed.
\newblock {\em Quantum Mathematical Physics: A Bridge between Mathematics and
  Physics}, pages 381--399, 2016.

\bibitem{Dirac1934}
P~A~M Dirac.
\newblock {Theorie du Positron}.
\newblock {\em {Selected Papers on Quantum Electrodynamics Edited by J.
  Schwinger, Dover Publications, Inc., New York}}, 1934.

\bibitem{Dyson2006}
F.~J. Dyson.
\newblock {Lectures on Advanced Quantum Mechanics Second Edition (1951)}.
\newblock {\em {World Scientific}}, 2006.

\bibitem{Fierz1979}
H.~Fierz and G.~Scharf.
\newblock {Particle interpretation for external field problems in QED}.
\newblock {\em {Helvetica Physica Acta. Physica Theoretica}},
  {52}(4):437\textendash 453 (1980), 1979.

\bibitem{finster2006principle}
F.~Finster.
\newblock {\em The principle of the fermionic projector}, volume~35.
\newblock American Mathematical Society, 2006.

\bibitem{finster2010causal}
F.~Finster.
\newblock Causal variational principles on measure spaces.
\newblock {\em Journal für die reine undangewandte Mathematik}, 2010:141--194,
  2010.

\bibitem{FinsterBook}
F.~Finster.
\newblock {\em Continuum Limit of Causal Fermion Systems}, volume 186.
\newblock Springer, 2016.

\bibitem{Finster2015a}
F.~Finster and J.~Kleiner.
\newblock Causal fermion systems as a candidate for a unified physical theory.
\newblock In {\em Journal of Physics: Conference Series}, volume 1-626, page
  012020. IOP Publishing, 2015.

\bibitem{finster2015non}
F.~Finster and M.~Reintjes.
\newblock {A non-perturbative construction of the fermionic projector on
  globally hyperbolic manifolds I: Space-times of finite lifetime}.
\newblock {\em Advances in Theoretical and Mathematical Physics},
  19(4):761--803, 2015.

\bibitem{finster2016non}
F.~Finster and M.~Reintjes.
\newblock {A non-perturbative construction of the Fermionic projector on
  globally hyperbolic manifolds II: Space-times of infinite lifetime}.
\newblock {\em Advances in Theoretical and Mathematical Physics},
  20(5):1007--1048, 2016.

\bibitem{gravejat2013construction}
P.~Gravejat, C.~Hainzl, M.~Lewin, and E.~S{\'e}r{\'e}.
\newblock {Construction of the Pauli--Villars-Regulated Dirac vacuum in
  electromagnetic fields}.
\newblock {\em Archive for Rational Mechanics and Analysis}, 208(2):603--665,
  2013.

\bibitem{gravejat2012two}
P.~Gravejat, C.~Hainzl, M.~Lewin, and {\'E}ric S{\'e}r{\'e}.
\newblock {Two Hartree-Fock models for the vacuum polarization}.
\newblock {\em Journ{\'e}es {\'e}quations aux d{\'e}riv{\'e}es partielles},
  pages 1--31, 2012.

\bibitem{gravejat2018derivation}
P.~Gravejat, M.~Lewin, and {\'E}ric S{\'e}r{\'e}.
\newblock {Derivation of the magnetic Euler--Heisenberg energy}.
\newblock {\em Journal de Math{\'e}matiques Pures et Appliqu{\'e}es},
  117:59--93, 2018.

\bibitem{hainzl2004vacuum}
C.~Hainzl.
\newblock On the vacuum polarization density caused by an external field.
\newblock {\em Annales Henri Poincar{\'e}}, 5(6):1137--1157, 2004.

\bibitem{hainzl2005self}
C.~Hainzl, M.~Lewin, and E.~S{\'e}r{\'e}.
\newblock {Self-consistent solution for the polarized vacuum in a no-photon QED
  model}.
\newblock {\em Journal of Physics A: Mathematical and General}, 38(20):4483,
  2005.

\bibitem{hainzl2007minimization}
C.~Hainzl, M.~Lewin, E.~S{\'e}r{\'e}, and J.P. Solovej.
\newblock Minimization method for relativistic electrons in a mean-field
  approximation of quantum electrodynamics.
\newblock {\em Physical Review A}, 76(5):052104, 2007.

\bibitem{kragh1990dirac}
H.~Kragh.
\newblock {\em Dirac: a scientific biography}.
\newblock Cambridge University Press, 1990.

\bibitem{Langmann1996}
E.~Langmann and J.~Mickelsson.
\newblock {Scattering matrix in external field problems}.
\newblock {\em {Journal of Mathematical Physics}}, {37}(8):3933\textendash
  3953, 1996.

\bibitem{lewin2014nonlinear}
M.~Lewin.
\newblock A nonlinear variational problem in relativistic quantum mechanics.
\newblock In {\em European Congress of Mathematics Krak{\'o}w, 2--7 July,
  2012}, pages 627--641, 2014.

\bibitem{Mickelsson1998}
J.~Mickelsson.
\newblock {Vacuum polarization and the geometric phase: gauge invariance}.
\newblock {\em {Journal of Mathematical Physics}}, {39}(2):831\textendash 837,
  1998.

\bibitem{mickelsson_phase_2014}
J.~Mickelsson.
\newblock The phase of the scattering operator from the geometry of certain
  infinite-dimensional groups.
\newblock {\em Letters in Mathematical Physics}, 104(10):1189--1199, 2014.

\bibitem{Mickelsson2001}
J.~Mickelsson and S.~Scott.
\newblock {Functorial QFT, gauge anomalies and the Dirac determinant bundle}.
\newblock {\em {Communications in mathematical physics}},
  {219}(3):567\textendash 605, 2001.

\bibitem{nenciu1987}
G.~Nenciu.
\newblock {Existence of the spontaneous pair creation in the external field
  approximation of Q.E.D.}
\newblock {\em Communications in Mathematical Physics}, 109(2):303--312, 1987.

\bibitem{PauliVillars49}
W.~Pauli and F.~Villars.
\newblock On the invariant regularization in relativistic quantum theory.
\newblock {\em Rev. Mod. Phys.}, 21:434--444, Jul 1949.

\bibitem{Pickl2008}
P.~Pickl and D.~D\"{u}rr.
\newblock {Adiabatic pair creation in heavy-ion and laser fields}.
\newblock {\em Europhysics Letters}, {81}(4):40001, January 2008.

\bibitem{pickl2008adiabatic}
P.~Pickl and D.~D{\"u}rr.
\newblock On adiabatic pair creation.
\newblock {\em Communications in Mathematical Physics}, 282(1):161, 2008.

\bibitem{loop1986}
A.~Pressley and G.~Segal.
\newblock {\em {Loop groups}}.
\newblock {Oxford Mathematical Monographs}. {The Clarendon Press Oxford
  University Press}, {New York}, 1986.

\bibitem{Ruijsenaars1977}
S.~N.~M. Ruijsenaars.
\newblock {Charged particles in external fields. II. The quantized Dirac and
  Klein-Gordon theories}.
\newblock {\em {Communications in Mathematical Physics}},
  {52}(3):267\textendash 294, 1977.

\bibitem{Scharf1995}
G.~Scharf.
\newblock {\em {Finite Quantum Electrodynamics}}.
\newblock {Texts and Monographs in Physics}. {Springer-Verlag}, {Berlin},
  {Second} edition, 1995.

\bibitem{scharf1986causal}
G.~Scharf and W.~F. Wreszinski.
\newblock The causal phase in quantum electrodynamics.
\newblock {\em Il Nuovo Cimento A (1965-1970)}, 93(1):1--27, 1986.

\bibitem{Scharf1986}
G.~Scharf and W.~F. Wreszinski.
\newblock {The causal phase in quantum electrodynamics}.
\newblock {\em {Societ\`{a} Italiana di Fisica. Il Nuovo Cimento. A. Serie
  11}}, {93}(1):1\textendash 27, 1986.

\bibitem{schlemmer2015current}
J.~Schlemmer and J.~Zahn.
\newblock The current density in quantum electrodynamics in external
  potentials.
\newblock {\em Annals of Physics}, 359:31--45, 2015.

\bibitem{Shale1965}
D.~Shale and W.~F. Stinespring.
\newblock {Spinor representations of infinite orthogonal groups}.
\newblock {\em {Journal of Mathematics and Mechanics}}, {14}:315\textendash
  322, 1965.

\bibitem{simon2005trace}
B.~Simon.
\newblock {\em Trace ideals and their applications}.
\newblock Number 120 in Mathematical Surveys and Monographs. American
  Mathematical Soc., 2005.

\bibitem{THOOFT1972189}
G.~{'t Hooft} and M.~Veltman.
\newblock Regularization and renormalization of gauge fields.
\newblock {\em Nuclear Physics B}, 44(1):189--213, 1972.

\bibitem{Thaller1992}
B.~Thaller.
\newblock {\em {The Dirac Equation}}.
\newblock {Texts and Monographs in Physics}. {Springer-Verlag}, {Berlin}, 1992.

\bibitem{Wurzbacher2001}
T.~Wurzbacher.
\newblock Fermionic second quantization and the geometry of the restricted
  {G}rassmannian.
\newblock In {\em Infinite dimensional {K}\"{a}hler manifolds ({O}berwolfach,
  1995)}, volume~31 of {\em DMV Sem.}, pages 287--375. Birkh\"{a}user, Basel,
  2001.

\end{thebibliography}

\end{document}